\titleformat{\section}{\Large\bfseries\fontsize{18}{15}}{\thesection}{0.81em}{}
\titleformat{\subsection}{\large\bfseries\fontsize{12}{10}}{\thesubsection}{0.69em}{}
\titleformat{\subsubsection}{\normalfont\bfseries\fontsize{12}{10}}{\thesubsubsection}{0.69em}{}
\definecolor{navy}{rgb}{0,0,.5}
\definecolor{mygreen}{rgb}{0,.35,0}
\definecolor{darkgreen}{rgb}{0,.35,0}
\definecolor{mydg}{rgb}{0,.35,0}   
\newcolumntype{d}[1]{D{.}{.}{#1}}
\newcolumntype{L}{D{.}{.}{3,2}}
\newcolumntype{C}[1]{D{.}{.}{#1}}
\crefname{proposition}{Proposition}{Propositions}
\Crefname{proposition}{Proposition}{Propositions}
\crefname{claim}{Claim}{Claims}
\Crefname{claim}{Claim}{Claims}
\crefname{table}{Table}{Tables}
\Crefname{table}{Table}{Tables}
\crefname{figure}{Figure}{Figures}
\Crefname{figure}{Figure}{Figures}
\crefname{section}{Section}{Sections}
\Crefname{section}{Section}{Sections}
\crefname{appendix}{Appendix}{Appendices}
\Crefname{appendix}{Appendix}{Appendices}
\crefname{equation}{equation}{equations}
\Crefname{equation}{Equation}{Equations}
\crefname{sfacts}{stylized fact}{stylized facts}
\Crefname{sfacts}{Stylized facts}{Stylized facts}
\crefname{remark}{Remark}{Remarks}
\Crefname{remark}{Remark}{Remarks}
\def\myleftmargin{2cm}
\newenvironment{enumerate0}{
\begin{enumerate}
  \setlength{\itemsep}{0pt}
  \setlength{\parskip}{0pt}
  \setlength{\parsep}{0pt}
}{\end{enumerate}}
\newenvironment{itemize0}{
\begin{itemize}
  \setlength{\itemsep}{0pt}
  \setlength{\parskip}{0pt}
  \setlength{\parsep}{0pt}
}{\end{itemize}}
\newcommand{\ie}[0]{i.e., }
\newcommand{\shorteq}{%
  \settowidth{\@tempdima}{-}
  \resizebox{\@tempdima}{\height}{=}%
}
\newtheorem{claim}{Claim}
\long\def\symbolfootnote[#1]#2{\begingroup%
\def\thefootnote{\fnsymbol{footnote}}\footnote[#1]{#2}\endgroup}
\begin{document}

\title{Labor Market Impact on Homelessness: Evidence from Canadian Administrative Data on Shelter Usage
}

\author{
Damba Lkhagvasuren\thanks{Dept. of Economics, Concordia University and CIREQ; e-mail: \texttt{damba.lkhagvasuren@concordia.ca}.}
\and
Purevdorj Tuvaandorj\thanks{Dept. of Economics, York University; e-mail: \texttt{tpujee@yorku.ca}.} 
}

\maketitle
\thispagestyle{empty} \setcounter{page}{0}

\begin{abstract} 
The overwhelming majority of homeless individuals are jobless, despite many expressing a willingness to work. While this strong individual-level link between homelessness and unemployment is well-documented, the broader impact of labor market dynamics on homelessness remains largely unexplored. To fill this gap, this paper investigates the impact of local labor market conditions on the duration of homelessness, using individuals' homeless shelter usage records as a proxy for measuring their homelessness duration. Specifically, drawing on Canada's National Homelessness Information System data from 2014 to 2017, we analyze how local employment growth and changes in the local employment rate affect shelter usage duration. Our findings reveal that a 1\% increase in local employment is associated with a 0.11-quarter (approximately 0.33-month) reduction in the average duration of shelter usage, while a 1\% rise in the local employment rate leads to a 0.34-quarter (approximately 1.02-month) reduction. These changes correspond to decreases of 2.9\% and 8.9\%, respectively, in the average duration of shelter stays. The findings underscore the critical role of employment opportunities in reducing homelessness and lend support to job-oriented policy interventions for the homeless. In addition, the results suggest that demographic disparities---particularly the overrepresentation of Indigenous people and men among the homeless---are partially explained by slower exit rates from homelessness within these groups.
\end{abstract}

\medskip
\medskip
\noindent
{\bf JEL Codes:} E24, I32, R12, R23 \\ \\
\noindent
{\bf Keywords:} homelessness, duration of homelessness, local labor market impact, homelessness across demographic groups, emergency shelter usage, administrative data \\

\onehalfspacing
\clearpage

\section{Introduction}\noindent 
Homeless individuals form a diverse population, exhibiting substantial variation in their pathways into and out of homelessness, as well as in their lived experiences \!---\! even among those with seemingly similar backgrounds.
Yet, despite this diversity, a critical commonality unites the overwhelming majority of the homeless: joblessness. This strong connection between homelessness and joblessness is well-documented in the literature. For instance, in a 2009 survey conducted in Sacramento, California, \citet{acuna2009_employment} found that 91.7\% of homeless individuals were not employed. Similarly, \citet{poremski2015employment} reported an average joblessness rate of 95.9\% among homeless individuals in a large survey spanning five Canadian cities.\footnote{\citet{homeless_labour_issues} identify key labor market challenges for the homeless, such as insufficient work, inconsistent pay, and strained employer relations. \citet{shelter_rules} stress how shelter regulations limit job opportunities. \citet{homelessness_employment} highlight barriers like health issues, substance use, lack of training, and institutional factors such as discrimination and shelter policies.}  Notably, both studies reveal that a majority of homeless individuals \!---\! ranging from 64.3\% to 87.4\% \!---\! express a desire to work.

While the \emph{individual-level} link between homelessness and unemployment is well established, the impact of labor market conditions on homelessness remains unexplored. In particular, how shifts across local labor markets\,\textemdash\,such as job availability and employment growth\,\textemdash\,affect homelessness is not well understood. In the absence of such evidence, it is challenging to determine whether employment provides an effective pathway for individuals to exit homelessness, as well as to fully assess the role of policy interventions aimed at supporting employment among the homeless.

The fundamental challenge in analyzing the labor market's impact on homelessness is the difficulty of obtaining accurate empirical measurements. Two key factors contribute to this. First, homeless individuals are a hard-to-reach population, making it difficult to track changes in their numbers over time and to assess how such changes are influenced by labor market conditions. Second, major labor force surveys, such as the Current Population Survey (U.S.) and the Labour Force Survey (Canada), typically omit homelessness status, creating a significant data gap that limits our ability to study the relationship between labor market fluctuations and homelessness.

Given these empirical challenges, this study examines how local labor market conditions influence shelter usage duration\,\textemdash\,the length of time individuals stay at a homeless shelter. Shelter usage duration is used as a proxy for the overall duration of homelessness, allowing for an indirect analysis of homelessness dynamics. This approach is rooted in well-established relationships: homelessness duration is inversely related to the exit rate\,\textemdash\,the rate at which individuals transition out of homelessness\,\textemdash\,which in turn is negatively associated with the overall homelessness rate. These relationships are detailed further in \cref{fig:flows} and \cref{app:two-state}.\footnote{The notion that entry and exit rates jointly determine the size of the homeless population\,\textemdash\,and that the exit rate is inversely related to the duration of homelessness\,\textemdash\,dates back to \citet{allgood1997duration}. Another important similarity between their work and the current study is the focus on shelter usage duration derived from administrative data.} By adopting this approach, the study provides a fresh perspective on how local labor market dynamics shape homelessness outcomes.

\begin{figure}[htbp]
\caption{Homelessness Rate, Duration, and Transition Rates (Entry and Exit)}\label{fig:flows}
\hspace{-1cm}
\begin{center}
\begin{tikzpicture}[scale=0.69, font=\normalsize]

\draw [magenta, line width=1mm, ->] (-4, 1.45) to [out=40, in=180] (0.4, 3.5)  to [out=0, in=125] (3.6, 1.7);
\node at (0.4,4.05) [] {entry rate: $\lambda$};

\draw [blue, line width=1mm, ->] (4.00,-1.0) to [out=-120, in=0] (0.5, -3.6)  to [out=180, in=320] (-2.6, -2.4);
\node at (0.5,-4.1) [] {exit rate: $\theta$};

\draw [line width=0.57mm, black!70, fill=green!30] (-4.5,0) circle [radius=3]; 
\node at (-4.5,0.5) {non-homeless};
\node at (-4.5,-0.3) {stock: $1-\alpha$};
\draw [line width=0.57mm, black!70, fill=yellow!50] (4, 0) circle [radius=1.65];
\node at (4,0.5)  {homeless};
\node at (4,-0.3)  {stock: $\alpha$};

\end{tikzpicture} 
\end{center}
\caption*{\footnotesize {\it Notes:}  
This figure illustrates a two-state stock-and-flow model of homelessness, with arrows indicating transitions between states over a unit period. In this model, \(\theta\) (\(0 < \theta \leq 1\)) signifies the \emph{exit rate}, representing the proportion of the homeless population that exits homelessness each period. Conversely, \(\lambda\) (\(0 < \lambda \leq 1\)) denotes the \emph{entry rate}, the proportion of the non-homeless population that becomes homeless within the same timeframe. Both rates are strictly positive. Crucially, the model predicts that a higher exit rate (\(\theta\)) reduces both the average \emph{homelessness duration}, \(\overline{D} = \frac{1}{\theta}\), and the stationary \emph{homelessness rate}, \(\alpha = \frac{\lambda}{\lambda+\theta}\). For a detailed analytical derivation of these relationships, refer to \cref{app:two-state}. 
}
\end{figure}

The main data used in this paper are drawn from Canada's National Homelessness Information System (NHIS), an administrative system designed to collect standardized, longitudinal data on homeless shelter usage. The NHIS provides real-time records of individuals' shelter stays, covering a wide range of shelters across all provinces and territories in Canada. 
For instance, in 2014, the NHIS recorded 67,885 individuals using at least one of the 273 emergency shelters spread across all ten provinces and three territories. This comprehensive coverage, coupled with the fact that most homeless individuals in Canada utilize shelters,\footnote{\citet{gaetz2014state} estimate that over 235,000 Canadians experienced homelessness in 2014, while \citet{PiT2016}  report that 136,865 individuals used emergency shelters that same year, suggesting that 58.2\% of the homeless population accessed shelters. Furthermore, \citet{ESDC:everyone} reports that one-quarter of currently homeless individuals had not used a shelter in the past year. However, this statistic excludes those who used shelters within the past year but have since exited homelessness. As a result, the proportion of currently homeless individuals who rely on shelters could be significantly higher than three-quarters.} provides a robust foundation for identifying empirical patterns in shelter usage and uncovering critical insights into the individual-level dynamics of homelessness relative to local labor market conditions.

A central aspect of this analysis is its focus on individuals who first entered the NHIS in 2014\,\textemdash\,first-time shelter users whose unique IDs appear in the 2014 records but not in any prior NHIS data. Restricting the sample to a single entry year serves three key purposes. First, it mitigates composition biases that could arise from comparing individuals who entered the shelter system at different times under varying external conditions. Second, it ensures that individuals within the same region and shelter usage duration experience homelessness under comparable external conditions, including macroeconomic factors, social policies, and the availability and nature of street drugs. Third, it ensures a more balanced representation of demographic groups in the sample, particularly those with varying homelessness exit rates, preventing the underrepresentation of individuals who exit more quickly. This allows for a more accurate assessment of labor market effects. As discussed below, these exit rates differ substantially across certain demographic groups.

We track first-time shelter users over a four-year period (2014-2017) to examine whether their shelter usage duration responds to local labor market conditions. The analysis focuses on two key indicators of local labor market conditions: employment growth and employment rate growth. These measures are grounded in prior research on labor market dynamics. \citet{Blanchard/Katz:1992} identify local employment growth as a strong indicator of regional labor market performance based on their influential study of U.S. labor markets. \citet{fraser_institute} further argue that, amid declining labor force participation rates in Canada, the employment rate has become an increasingly reliable metric for assessing labor market conditions. Local labor markets are defined at the provincial and territorial level, as these represent the smallest administrative geographic units in Canada for which consistent labor market indicators are available \citep{local_labor_data:2021}. This definition ensures that the local labor market indicators are measured in a standardized and comparable manner across regions.

In addition to analyzing the local labor market's effect on average homelessness duration, our study also investigates its impact on chronic homelessness. Chronic homelessness refers to prolonged periods without stable housing, typically lasting six months or more, or involving repeated episodes over several years.\footnote{A comprehensive typology of homelessness is provided by \citet{Canadian_definition}.} This persistent condition severely impacts the well-being of those affected and places significant strain on public services and healthcare systems. To evaluate the role of labor markets in long-term homelessness, this study defines chronic homelessness based on shelter usage records and incorporates this definition into the analysis to inform targeted policy interventions.

Accurately tracking homelessness trends is essential for effective policymaking and resource allocation, yet the absence of reliable time-series data on homelessness presents a significant challenge. Our analysis provides a key insight: regional employment growth has substantial predictive power for homelessness trends. By incorporating the local labor market effects identified in this study, policymakers can better anticipate changes in homelessness duration and implement more targeted interventions.

Beyond its implications for employment-related policies and homelessness forecasting, the analysis reveals significant demographic disparities in shelter exits. The findings show that the overrepresentation of men and Indigenous people in the homeless population is partly attributable to their extended durations of homelessness and lower exit rates.

The remainder of the paper is organized as follows: 
\cref{sec:data} describes the NHIS sample and introduces the key variables used in the empirical analysis. \cref{sec:labormarket} discusses the local labor market indicators, focusing on the local employment growth and the local employment rate growth. 
\cref{sec:findings} outlines the empirical models and presents the main findings. 
Finally, \cref{sec:conclusion} provides concluding remarks.

\section{NHIS data}\label{sec:data} 

The primary data source for this study is Canada's National Homelessness Information System (NHIS), an administrative database designed to track standardized, longitudinal records of homeless shelter usage nationwide. The NHIS provides real-time shelter stay records, capturing essential details on individuals' check-in and check-out dates, as well as the geographic locations of the shelters.

In addition to tracking shelter stays, the NHIS records key demographic information, such as age, gender, Indigenous identity, family status (i.e., whether the individual is accompanied by a dependent or family member), citizenship, and veteran status, which identifies former members of the Armed Forces or the Royal Canadian Mounted Police.

\subsection{Sample description}\label{sec:sample}
In 2014, a total of 67,885 individuals across Canada's 10 provinces and three territories used at least one of the 273 emergency shelters covered by the National Homelessness Information System (NHIS). Among them, 31,588 were first-time shelter users aged 20 and older. This study focuses on these individuals. A shelter user is classified as a first-time user if their unique ID appears in the NHIS for the first time in 2014 and does not appear in any prior records. To analyze their shelter usage patterns, we track these individuals over a four-year period from January 1, 2014, to December 31, 2017.\footnote{The sample period for this study predates the COVID-19 pandemic. Future research could benefit from comparing pre-COVID and COVID-era homelessness rates to gain insights into the potential impact of the pandemic on shelter usage durations.} 

As outlined in the Introduction, restricting the sample to first-time shelter users within a single year serves three key purposes. First, it mitigates composition biases that could arise from comparing individuals who entered the shelter system under different external conditions. Second, it ensures that individuals with the same shelter usage duration within a given region experienced homelessness under comparable macroeconomic and policy environments. Third, it prevents the underrepresentation of demographic groups with faster homelessness exit rates, enabling a more precise assessment of labor market effects.

\subsection{Demographic characteristics of the sample}
The average age of individuals in the sample is 39.82 years. \Cref{tab:composition} presents the demographic composition of the homeless population based on available categorical variables. The key patterns emerging from the sample are as follows: 
\begin{itemize0} 
\item[$(i)$] Men constitute approximately 70\% of the sample, highlighting the predominance of men in the homeless population.
\item[$(ii)$] More than three-quarters of homeless individuals are unaccompanied by dependents or family members.
\item[$(iii)$] Individuals identifying as Indigenous make up over 28\% of the sample\,\textemdash\,nearly five times their share of the overall adult population in Canada, which is approximately 5\% \citep{statcanada:2021}. This underscores the significant overrepresentation of Indigenous people among the homeless.
\item[$(iv)$] Non-citizens account for about 5\% of the sample, while veterans represent less than 2\%. However, these figures should be interpreted with caution, as citizenship status is unknown for 42\% of individuals, and veteran status is unknown for 58\%.
\end{itemize0}

\renewcommand{\arraystretch}{0.85}
\begin{table}[htbp]
\begin{center}
\caption{Demographic Composition of the NHIS Sample}\label{tab:composition}
\begin{tabular}{llcrllr}
\toprule \\[-3mm]
\multicolumn{2}{l}{\emph{gender}}	&  & \qquad \qquad & \multicolumn{2}{l}{\emph{citizenship}}	&		\\	
&	female	&	30.04\%	&   &   \qquad \qquad &	Canadian	&	53.27\%	\\
&	male	&	69.96\%  & &	&					non-Canadian	\quad &	4.61\%	\\
\qquad \qquad &		&	&	&&				unknown	&	42.12\%	\\
\multicolumn{3}{l}{\emph{family status}}	& & 		&		& \\
&	accompanied &	11.09\%	&   & \multicolumn{2}{l}{\emph{veteran status}}	&	\\
&	unaccompanied  \qquad	&	77.12\% && &				veteran	&	1.50\%	\\
&	unknown	&	11.79\%	&    & &				non-veteran	&	40.06\%	\\
&		&		&	& &						unknown &	58.44\%	\\
\multicolumn{2}{l}{\emph{Indigenous status}} &&&&			\\
&	Indigenous	&	28.28\% &	& &					\\
&	non-Indigenous	&	46.49\%	 &	& &					\\
&	unknown	&	25.23\%	 &    & &					
\\[2mm]
\bottomrule
\end{tabular}
\vspace{0.1cm}
\caption*{\footnotesize {\it Notes:} 
The average age is 39.82, with a standard deviation of 13.27. Here, ``accompanied" refers to individuals being accompanied by a dependent or family member.
}
\end{center}
\end{table}

\subsection{Duration of homelessness}\label{sec:emp_duration}
The main variable of interest is the duration of homelessness. As previously mentioned, in this analysis, shelter usage duration serves as a proxy for homelessness duration. For brevity, and with a slight concession in terminology, we refer to shelter usage duration simply as homelessness duration. Throughout the analysis, $D$ denotes the duration of homelessness.

However, it is crucial to acknowledge the distinction between homelessness duration and shelter usage duration: a brief absence from shelter records does not necessarily imply that an individual has exited homelessness. For example, individuals may alternate between staying in shelters and living on the streets during a single episode of homelessness. Given this mobility across different living situations, daily or weekly measures of shelter usage may not fully capture the extent of an individual's homelessness experience. To address this, we calculate the duration of homelessness at a quarterly frequency (\ie 90 days). Shelter stays separated by less than a quarter are therefore treated as part of the same episode of homelessness. Furthermore, measuring homelessness duration on a quarterly basis aligns well with the definition of chronic homelessness, as the reference periods for defining chronic homelessness are typically integer multiples of quarters (e.g., 2 or 12 quarters), as outlined below.

The summary statistics for homelessness duration, $D$, are presented in \cref{tab:duration_summary}. The average duration is 3.758 quarters, slightly less than one year. The standard deviation is considerable, indicating significant variation across individuals.

\begin{table}[htpb]
\begin{center}
\caption{Homelessness Duration and Chronic Homelessness}
\label{tab:duration_summary}
\begin{tabular}{cclSS}
\toprule 
\multicolumn{1}{c}{\multirow{2}{*}{\it variables}} & \; \qquad & \multicolumn{1}{c}{\multirow{2}{*}{\it description}} & \multicolumn{1}{c}{\multirow{2}{*}{\it mean}} &  \multicolumn{1}{c}{\it standard} \\
&&&& {\it deviation} \\
\midrule \\[-3mm]
$D$ && duration of homelessness (in quarters) \qquad \qquad  &  3.757894 &   4.294373 \\ [1mm]
 $I_1$ && first criterion of chronic homelessness & 0.051663  &   0.221349 \\ [1mm]
 $I_2$ &&  second criterion of chronic homelessness & 0.0968305 &   0.2957312\\ [1mm]
  $\chi$ && chronic homelessness & 0.1270223  &   0.333003\\[2mm]
\bottomrule 
\end{tabular}
\vspace{0.1cm}
\end{center}
\end{table}%

\subsection{Chronic homelessness}\label{sec:chron}
In addition to homelessness duration, we also incorporate a dichotomous variable representing chronic homelessness. While the specific definition of chronic homelessness varies across countries and regions, it typically involves two key criteria: continuous homelessness for an extended period, usually between 6 and 12 months, \emph{and} multiple episodes of homelessness within a specified timeframe, typically over three years. Based on these criteria, we introduce two dummy variables, $I_1$ and $I_2$, defined as follows. 
\begin{itemize0}
\item $I_1$ takes a value of 1 if the individual experienced at least 180 days of homelessness within a 365-day period, and 0 otherwise. 
\item $I_2$ takes a value of 1 if the individual experienced at least one episode of homelessness per year for three consecutive years, and 0 otherwise. 
\end{itemize0}
 
Any individual satisfying either of the conditions associated with $I_1$ and $I_2$ is considered chronically homeless. Specifically, a dummy variable indicating chronic homelessness is defined as follows:  
\begin{equation}
\chi = \begin{cases}
1 & \text{if } I_1+I_2 \ge 1, \\
0 & \text{otherwise.}
\end{cases}
\end{equation}
Summary statistics for these variables are presented in \cref{tab:duration_summary}.
It shows that among individuals who used shelters for the first time in 2014, 12.7\% experienced chronic homelessness.

\section{Quantifying local labor market conditions}\label{sec:labormarket}
The aim of this paper is to examine whether labor market conditions affect the duration of homelessness and chronic homelessness. 

\subsection{Two indicators}
To capture local labor market conditions, we consider two key indicators:  
\begin{enumerate0}
    \item[($i$)] local employment growth, and
    \item[($ii$)] local employment rate growth.
\end{enumerate0}
Both indicators are constructed by controlling for aggregate effects, as explained below. 

The selection of these indicators is guided by the following considerations. First, \citeauthor{Blanchard/Katz:1992}'s (\citeyear{Blanchard/Katz:1992}) seminal analysis of regional labor market dynamics across U.S. states established local employment growth as a key measure of regional labor market performance. Second, \citet{fraser_institute} argue that, given Canada's declining labor force participation rates, changes in the employment rate provide a more reliable indicator of labor market conditions. Third, beyond labor force participation trends, employment rate growth is particularly relevant in high-immigration countries like Canada, as it inherently adjusts for demographic shifts and provides a more accurate measure of job availability relative to the working-age population.

\subsection{Data source and local labor markets}

The labor market indicators used in this study are constructed from annual labor force series at the provincial and territorial level. These series are tabulated by \citet{local_labor_data:2021} using the Canadian Labour Force Survey, which provides consistent labor market data across subnational regions. Detailed definitions of the survey variables are available in Statistics Canada's \emph{Guide to the Labour Force Survey}.\footnote{URL: \url{https://www150.statcan.gc.ca/n1/pub/71-543-g/71-543-g2018001-eng.htm}}

While provinces and territories represent the smallest administrative units for which consistent labor market data are available, treating each as a separate labor market can reduce the accuracy of indicators for regions with very small labor forces, as the reliability of labor market performance measures depends on labor force size. To address this, we merge certain smaller provinces and territories into larger, geographically proximate regions to improve measurement accuracy. Specifically, Nunavut and Yukon\,\textemdash\,whose combined labor force accounted for only 0.18\% of the Canadian total in 2014\,\textemdash\,are merged with the Northwest Territories. Similarly, Prince Edward Island, which represented just 0.43\% of the national labor force in 2014, is combined with neighboring Nova Scotia. These adjustments yield a total of 10 regions, ensuring a more accurate representation of local labor market conditions. With this regional framework in place, we now formally define the two labor market indicators.

\subsection{Local employment growth}
Following \citet{Blanchard/Katz:1992}, local employment growth is constructed as the change in the log ratio of local employment to aggregate (\ie national) employment. To construct this indicator, we first define the relative employment of region $j$ in year $t$ as follows:
\begin{equation}
e_{j,t} = \ln(E_{j,t}) - \ln(\overline{E}_{t}),
\end{equation} 
where $E_{j,t}$ is the total employment of region $j$, and $\overline{E}_{t}$ is the aggregate employment of Canada in year $t$. The annual growth of this relative employment is given by 
\(
\Delta e_{j,t} = e_{j,t+1} - e_{j,t}.
\)
Thus, the indicator of local employment growth over the sample period is defined as:
\begin{equation}
\Delta \mathcal{E}_j = \frac{1}{3} \sum_{t=2014}^{2016} \Delta e_{j,t}.
\end{equation}
This indicator represents the percentage change in regional employment relative to the national average.

\subsection{Local employment rate growth} 
Let $r_{j,t}$ denote the employment rate of region $j$ relative to the aggregate employment rate of Canada in year $t$, \ie  
\begin{equation}
r_{j,t}= \frac{R_{j,t}}{\overline{R}_{t}},
\end{equation}
where $R_{j,t}$ is the employment rate of region $j$ and $\overline{R}_{t}$ is the aggregate (national) employment rate in year $t$.  The annual growth rate of the relative employment rate is then given by 
\(
\Delta r_{j,t} = r_{j,t+1} - r_{j,t}.
\)
The average growth of the (relative) employment rate for region $j$ over the sample period is thus defined as
\begin{equation}\label{eq:define_ER}
\Delta \mathcal{R}_j = \frac{1}{3} \sum_{t=2014}^{2016} \Delta r_{j,t}.
\end{equation}

\subsection{Summary statistics}
\Cref{tab:local} presents key statistics for the labor market indicators. The mean and standard deviation of local employment growth are -0.0063 and 0.0108, respectively, while those of local employment rate growth are -0.0039 and 0.0076. \cref{tab:local} also shows a strong and significant positive correlation between the two indicators, which is unsurprising, as both are based on the common underlying variable of local employment.

The means of both indicators are close to zero, as expected, since they were constructed by controlling for aggregate effects. Additionally, the stability of Canada's aggregate labor market indicators during the sample period further contributes to this result. However, for the purposes of this analysis, the means are not particularly important. What matters is the magnitude of the standard deviations, as they capture regional differences in labor market shifts, which are central to the analysis. In percentage terms, the standard deviations of the two indicators across regions are 1.08\% and 0.76\%, respectively. Although these variations may seem small, we show below that these small differences have significant effects on both homelessness duration and chronic homelessness.

\renewcommand{\arraystretch}{1.00}
\begin{table}[htbp]
\begin{center}
\caption{Local Labor Market Indicators}\label{tab:local}
\sisetup{
  round-mode               = places,
  round-precision          = 4,
  group-digits             = false,
  input-open-uncertainty   = ,
  input-close-uncertainty  = ,
  table-format             = +2.4),
  minimum-decimal-digits   = 4,
  table-align-text-before  = false
}
\begin{tabular}{l c c c }
\toprule
	& \; \qquad {\it local}  & \; \qquad & \; \qquad {\it local}    \\
		& \; \qquad {\it employment}  & \; \qquad & \; \qquad {\it employment}    \\
	& \; \qquad {\it growth, $\Delta \mathcal{E}_j$} && \; \qquad {\it rate growth, $\Delta \mathcal{R}_j$}  \\
	\midrule \\[-4mm] 
mean			& \multicolumn{1}{L}{-0.0063}      	&&  \multicolumn{1}{L}{-0.0039}	    \\
standard deviation	&  \multicolumn{1}{L}{0.0108}  	&&  \multicolumn{1}{L}{0.0076}	     \\ [1mm]
\midrule \\ [-3mm]
Pearson correlation, $\rho^{\text{P}}_{\Delta \mathcal{E}_j, \Delta \mathcal{R}_j} $  	& \multicolumn{3}{S}{0.8047***}  \\
	&   \multicolumn{3}{S}{(0.0050)}         \\ [1mm]
Spearman rank correlation, $\rho^{\text{S}}_{\Delta \mathcal{E}_j, \Delta \mathcal{R}_j} $ 	& \multicolumn{3}{S}{0.7333**}  \\
	&   \multicolumn{3}{S}{(0.0158)}          \\ [1mm]
\bottomrule
\end{tabular}
\vspace{0.1cm}
\caption*{\footnotesize {\it Notes:} The p-values are in parentheses. Significance levels are marked as: ** $\text{p}<0.05$,  *** $\text{p}<0.01$}
\end{center}
\end{table}

\section{Empirical findings}\label{sec:findings}
To evaluate the impact of local labor markets, we employ a linear regression model for homelessness duration ($D$) and a logistic regression model for the dichotomous variable ($\chi$), which represents chronic homelessness. In each model, we include one labor market indicator at a time, along with the individual's age and other individual-level characteristics presented in \cref{tab:composition}. Each observation in the regressions represents an individual who began using homeless shelters in 2014 (see \cref{sec:sample}). 

 \subsection{Main results}

\cref{tab:empgrowth} presents the impact of local employment growth. For both specifications, the coefficient for employment growth, $\Delta \mathcal{E}_j$, is negative and significant at the 0.01 level. The first column of \cref{tab:empgrowth} indicates that a 1\% increase in local employment is associated with a 0.11-quarter reduction in the average duration of homelessness. This suggests that individuals in regions with higher employment growth tend to experience shorter durations of homelessness. Chronic homelessness ($\chi$) also responds negatively to local employment growth, indicating that the likelihood of chronic homelessness decreases as local employment increases.

 \begin{table}[htbp]
 \medskip
\begin{center}
\caption{Impact of Local Employment Growth}\label{tab:empgrowth}
\sisetup{
  round-mode               = places,
  round-precision          = 3,
  group-digits             = false,
  input-open-uncertainty   = ,
  input-close-uncertainty  = ,
  table-format             = +1.3),
  minimum-decimal-digits   = 3,
  table-align-text-before  = false
}
\begin{tabular}{lSSSS} 
\toprule 
\\[-4mm]
&   	{OLS regression of}			&  	{\quad }	&	{Logit regression of}		\\  
&   	{homelessness duration,}			&  	{\quad }	&	{chronic homelessness,}		\\  
&   	{$D$}			&  	{\quad }	&	{$\chi$}		\\ [1mm] 
\cmidrule[0.3mm](lr){2-2} \cmidrule[0.3mm](lr){4-4}
\\[-4mm]		
age						&	-0.001		&		&	0.010***		\\
						&	(0.002)		&		&	(0.001)		\\ [2mm]
female					&	-0.317***		&		&	-0.016		\\
						&	(0.046)		&		&	(0.044)		\\ [2mm]
accompanied				&	-0.704***		&		&	-0.420***		\\
						&	(0.068)		&		&	(0.077)		\\ [2mm]
family situation unknown		&	1.177***		&		&	0.384***		\\
						&	(0.082)		&		&	(0.066)		\\ [2mm]
Indigenous				&	0.471***		&		&	0.315***		\\
						&	(0.052)		&		&	(0.050)		\\ [2mm]
Indigenous  status unknown	&	-0.507***		&		&	0.357***		\\
						&	(0.069)		&		&	(0.063)		\\ [2mm]
{\bf employment growth,}  $\Delta \mathcal{E}_j$ 		
						&	-10.950***		&		&	-14.497***		\\
						&	(3.852)		&		&	(3.466)		\\ [2mm]			
citizenship \& veteran statuses	& {\Large $\checkmark$}  &   &  {\Large $\checkmark$}	\\ [3mm] 
\midrule
adjusted or pseudo $R^2$ 	&  0.036 & & 0.022 \\
\bottomrule
\end{tabular}
\vspace{0.1cm}
\caption*{\footnotesize {\it Notes:} 
The variable ``accompanied" is a binary indicator representing whether the person is accompanied by a dependent or family member. Checkmarks indicate the additional variables included in the regression. In each specification, the reference group consists of non-veteran, non-Indigenous, male Canadian citizens who are unaccompanied by a dependent or family member. 
Standard errors are in parentheses.
Significance levels are marked as: * $\text{p}<0.1$, ** $\text{p}<0.05$,  *** $\text{p}<0.01$.  
As noted in \cref{sec:data}, citizenship and veteran statuses are unknown for a substantial portion of the sample. Given this limitation and to conserve space, the coefficients for the multiple dummy variables representing these statuses are not reported.
}
\end{center}
\end{table}

\cref{tab:emprategrowth} shows the impact of the local employment rate. The coefficient for employment rate growth, $\Delta \mathcal{R}_j$, is negative and significant at the 0.01 level for both specifications. The first column indicates that a 1\% increase in the local employment rate leads to a 0.34-quarter reduction in the average duration of homelessness, equivalent to approximately one month. The table also demonstrates that an increase in the local employment rate is associated with significant reductions in chronic homelessness.

As noted in \cref{tab:duration_summary}, the average homelessness duration is 3.758 quarters. This number, along with the estimates in  \cref{tab:emprategrowth,tab:empgrowth}, suggests that a 1\% increase in local employment growth and local employment rate growth leads to a 2.9\% and 8.9\% reduction in the average homelessness duration, respectively.

A comparison of \cref{tab:emprategrowth,tab:empgrowth} reveals that the local employment rate growth, $\Delta \mathcal{R}_j$, 
has a significantly stronger impact, in absolute terms, than local employment, $\Delta \mathcal{E}_j$. 
Moreover, given that the employment rate is considered a more reliable indicator of labor market conditions in Canada \citep{fraser_institute}, the findings in \cref{tab:emprategrowth} may better reflect the impact of local labor markets on homelessness duration and chronic homelessness.

\renewcommand{\arraystretch}{1}
 \begin{table}[htbp]
\begin{center}
\caption{Impact of Local Employment Rate Growth}\label{tab:emprategrowth}
\sisetup{
  round-mode               = places,
  round-precision          = 3,
  group-digits             = false,
  input-open-uncertainty   = ,
  input-close-uncertainty  = ,
  table-format             = +1.3),
  minimum-decimal-digits   = 3,
  table-align-text-before  = false
}
\begin{tabular}{lSSSS} 
\toprule 
\\[-4mm]
&   	{OLS regression of}			&  	{\quad }	&	{Logit regression of}		\\  
&   	{homelessness duration,}			&  	{\quad }	&	{chronic homelessness,}		\\  
&   	{$D$}			&  	{\quad }	&	{$\chi$}		\\ [1mm] 
\cmidrule[0.3mm](lr){2-2} \cmidrule[0.3mm](lr){4-4}
\\[-4mm]	
age						&	0.000		&		&	0.010***		\\
						&	(0.002)		&		&	(0.001)		\\ [2mm]
female					&	-0.323***		&		&	-0.023		\\
						&	(0.046)		&		&	(0.044)		\\ [2mm]
accompanied				&	-0.733***		&		&	-0.448***		\\
						&	(0.068)		&		&	(0.077)		\\ [2mm]
family situation	unknown 		&	0.873***		&		&	0.064		\\
						&	(0.094)		&		&	(0.077)		\\ [2mm]
Indigenous				&	0.415***		&		&	0.254***		\\
						&	(0.053)		&		&	(0.051)		\\ [2mm]
Indigenous status unknown	&	-0.668***		&		&	0.223***		\\
						&	(0.071)		&		&	(0.065)		\\ [2mm]
{\bf employment rate growth,} $\Delta \mathcal{R}_j$
						&	-33.598***		&		&	-35.800***		\\
						&	(4.793)		&		&	(4.340)		\\ [2mm]
citizenship \& veteran statuses	& {\Large $\checkmark$}  &   &  {\Large $\checkmark$}	\\ [3mm] 
\midrule
adjusted or pseudo $R^2$		&	0.037		&		&	0.025		\\ [0mm]	
\bottomrule
\end{tabular}
\caption*{\footnotesize {\it Notes:} The same as in \cref{tab:empgrowth}.
}
\end{center}
\end{table}

\subsection{Demographic effects}
Although the primary objective of this study is to examine the impact of local labor market conditions, 
\cref{tab:empgrowth,tab:emprategrowth} also provide valuable insights into demographic effects. These effects are summarized as follows: 
\begin{itemize0} 
\item[($i$)] Female individuals experience homelessness for 0.32 quarters less, on average, compared to their male counterparts. This finding aligns with the earlier work of \citet{allgood1997duration}, which reported a similar effect using U.S. shelter usage data. 
\item[($ii$)] Individuals accompanied by a dependent or family member tend to experience shorter durations of homelessness and fewer instances of chronic homelessness.
\item[($iii$)] Individuals identifying as Indigenous experience homelessness for 0.42 to 0.47 quarters longer than otherwise similar individuals and have higher rates of chronic homelessness compared to the rest of the homeless population. 
\end{itemize0}

Our analysis offers valuable insights into two important aspects of data: ($i$) the overrepresentation of Indigenous individuals within the homeless population and ($ii$) significant gender disparities in homelessness (see \cref{tab:composition}). 
The significant negative coefficients of the variable \emph{Indigenous} in the regressions of homelessness duration suggest that the overrepresentation of Indigenous individuals in the homeless population may be partly due to their slower exit rate from homelessness, contributing to their higher overall homelessness rates (see \cref{fig:flows}). Similarly, the negative coefficients of the variable \emph{female} in the regressions of homelessness duration indicate that, on average, males experience longer durations of homelessness, reflecting a slower exit rate, which helps to explain the predominance of males in the homeless population.

\section{Conclusions}\label{sec:conclusion}

Reducing the social and economic burden of homelessness requires not only preventing individuals from becoming homeless but also, critically, minimizing the time they remain without stable housing. While the small but rapidly growing literature on homelessness offers valuable insights into the strong individual-level connection between homelessness and joblessness, it provides comparatively limited understanding of how labor market conditions shape the duration of homelessness. Addressing this gap is essential for developing effective policies that support transitions out of homelessness and foster housing stability.

This paper contributes to filling this gap by examining the relationship between local labor market conditions and the duration of homelessness shelter usage in Canada. Our findings reveal that a 1\% increase in local employment reduces the average duration of shelter usage by 0.11 quarters, while a 1\% rise in the local employment rate shortens it by 0.34 quarters, corresponding to decreases of 2.9\% and 8.9\% in the average duration of homelessness, respectively. The results further indicate that homeless individuals in regions with higher employment growth face a reduced likelihood of experiencing chronic homelessness compared to those in areas with slower employment growth.

Nevertheless, these findings warrant careful interpretation. The analysis focuses on relative durations of homelessness by comparing individuals who became homeless during the same period but resided in regions with differing labor market conditions. The results do not necessarily imply that employment growth \emph{universally} reduces homelessness. Instead, they highlight that, \emph{all else being equal}\,\textemdash\,including \emph{the rate of entry} into homelessness\,\textemdash\,individuals in regions with stronger employment growth tend to exit homelessness more quickly than those in areas with weaker labor market performance.

Our findings that local employment growth and increases in employment rates significantly reduce shelter usage duration underscore the potential of local employment policies to mitigate homelessness. 
This aligns with previous work by \citet{role_of_emp1} and \citet{poremski2015employment}, which emphasizes the pivotal role of employment in addressing the multifaceted challenges of homelessness.

The findings suggest that employment growth has significant predictive power for homelessness trends. 
Accurate forecasts of homelessness help national and local authorities and non-governmental organizations respond more effectively to changes in homelessness and allocate resources efficiently.
However, generating reliable forecasts based solely on homelessness data is challenging due to the absence of comprehensive time-series data and the complex nature of homelessness. Our results demonstrate that local labor market forecasts can serve as a reliable predictor of regional homelessness trends.\footnote{Other potentially relevant predictors that may complement local labor market forecasts include housing affordability \citep{pawson_housing_afford}, social assistance benefits \citep{falvo2024}, and population-level emergency department visits \citep{Strobel_EDvisits}.}  
By combining local employment forecasts with the observed relationship between employment growth and shelter usage duration, policymakers and organizations can anticipate changes in homelessness more effectively and implement targeted interventions.

Furthermore, although not the primary focus of this study, the analysis provides valuable insights into demographic disparities in homelessness. Specifically, the overrepresentation of Indigenous people and men in the homeless population is partly attributable to their slower exit rates from homelessness.

These findings suggest that demographic factors, alongside economic conditions, play a critical role in shaping homelessness outcomes. To fully understand these dynamics, future research should explore both entry and exit rates of homelessness, ideally using a dynamic structural framework. Such an approach would offer a more comprehensive view of the processes that lead to and prolong homelessness, helping to inform more targeted interventions.

\subsection*{Acknowledgements}
Part of the analysis in this paper was presented at the 2024 CIREQ Interdisciplinary Workshop on Homelessness in Montreal, Canada, where it benefited from the insights and feedback of the workshop participants. We also extend our gratitude to Nan Zhou, Senior Policy Analyst at Housing, Infrastructure, and Community Canada (HICC), for his insightful comments, and to Bilguun Sukhbaatar for his excellent research assistance. This work was supported by Infrastructure Canada under Grant number 4100009143 and Concordia University Research Fund under Grant number 300010818. 

\subsection*{Data availability statement}
This paper uses two datasets: ($i$) Annual labor force characteristics by province, territory, and economic region;
($ii$) Administrative data on individuals' emergency shelter usage from Canada's National Homelessness Information System. The first dataset is publicly available at Statistics Canada's website (\url{https://doi.org/10.25318/1410009001-eng}). The second dataset cannot be shared due to confidentiality requirements to protect the sensitive information of emergency shelter users and to comply with HICC's data governance policies. 

\clearpage
\bibliographystyle{chicago}
\bibliography{homeless_emp.bib}

\begin{thebibliography}{}

\bibitem[\protect\citeauthoryear{Acuna and Erlenbusch}{Acuna and
  Erlenbusch}{2009}]{acuna2009_employment}
Acuna, J. and B.~Erlenbusch (2009).
\newblock {Homeless Employment Report: Findings and Recommendations}.
\newblock {\em Washington: National Coalition for the Homeless\/}.

\bibitem[\protect\citeauthoryear{Allgood, Moore, and Warren~Jr}{Allgood
  et~al.}{1997}]{allgood1997duration}
Allgood, S., M.~L. Moore, and R.~S. Warren~Jr (1997).
\newblock {The Duration of Sheltered Homelessness in a Small City}.
\newblock {\em Journal of Housing Economics\/}~{\em 6\/}(1), 60--80.
\newblock https://doi.org/10.1006/jhec.1997.0204.

\bibitem[\protect\citeauthoryear{Blanchard and Katz}{Blanchard and
  Katz}{1992}]{Blanchard/Katz:1992}
Blanchard, O. and L.~Katz (1992).
\newblock {Regional Evolutions}.
\newblock {\em Brookings Papers on Economic Activity\/}~{\em 23\/}(1), 1--61.
\newblock https://doi.org/10.2307/2534556.

\bibitem[\protect\citeauthoryear{Clemens and Palacios}{Clemens and
  Palacios}{2018}]{fraser_institute}
Clemens, J. and M.~Palacios (2018).
\newblock {Why the Unemployment Rate is No Longer a Reliable Gauge of Labour
  Market Performance}.
\newblock Fraser Research Bulletin.

\bibitem[\protect\citeauthoryear{{Employment {and} Social Development
  {C}anada}}{{Employment {and} Social Development
  {C}anada}}{2019}]{ESDC:everyone}
{Employment {and} Social Development {C}anada} (2019).
\newblock {Everyone Counts 2018: Highlights - Preliminary Results from the
  Second Nationally Coordinated Point-in-Time Count of Homelessness in Canadian
  Communities}.
\newblock Report.

\bibitem[\protect\citeauthoryear{Falvo}{Falvo}{2024}]{falvo2024}
Falvo, N. (2024).
\newblock {Economic and Social Factors Associated with the Use of Homeless
  Shelters}.
\newblock Technical report, Canadian Observatory on Homelessness.

\bibitem[\protect\citeauthoryear{Gabriel, Schoen, Ciudad-{R}eal, and
  Broslawsky}{Gabriel et~al.}{2020}]{homelessness_employment}
Gabriel, I., E.~Schoen, V.~Ciudad-{R}eal, and A.~Broslawsky (2020, April).
\newblock {Homelessness and Employment}.
\newblock Report, Homelessness Policy Research Institute.

\bibitem[\protect\citeauthoryear{Gaetz, Barr, Friesen, Harris, Hill,
  Kovacs-Burns, Pauly, Pearce, Turner, and Marsolais}{Gaetz
  et~al.}{2012}]{Canadian_definition}
Gaetz, S., C.~Barr, A.~Friesen, B.~Harris, C.~Hill, K.~Kovacs-Burns, B.~Pauly,
  B.~Pearce, A.~Turner, and A.~Marsolais (2012).
\newblock {Canadian Definition of Homelessness}.
\newblock Technical report, Canadian Observatory on Homelessness.

\bibitem[\protect\citeauthoryear{Gaetz, Dej, Richter, and Redman}{Gaetz
  et~al.}{2016}]{PiT2016}
Gaetz, S., E.~Dej, T.~Richter, and M.~Redman (2016).
\newblock {\em {The State of Homelessness in Canada 2016}}.
\newblock Canadian Homelessness Research Network.

\bibitem[\protect\citeauthoryear{Gaetz, Gulliver, and Richter}{Gaetz
  et~al.}{2014}]{gaetz2014state}
Gaetz, S., T.~Gulliver, and T.~Richter (2014).
\newblock {\em {The State of Homelessness in Canada 2014}}.
\newblock Canadian Homelessness Research Network.

\bibitem[\protect\citeauthoryear{Hamilton}{Hamilton}{1989}]{Hamilton}
Hamilton, J.~D. (1989).
\newblock {A New Approach to the Economic Analysis of Nonstationary Time Series
  and the Business Cycle}.
\newblock {\em Econometrica\/}~{\em 57\/}(2), 357--384.
\newblock https://doi.org/10.2307/1912559.

\bibitem[\protect\citeauthoryear{Hamilton}{Hamilton}{1994}]{Hamilton_book}
Hamilton, J.~D. (1994).
\newblock {\em {Time Series Analysis}}.
\newblock Princeton, New Jersey: Princeton University Press.

\bibitem[\protect\citeauthoryear{Lkhagvasuren}{Lkhagvasuren}{2023}]{BMC}
Lkhagvasuren, D. (2023, June).
\newblock {Skewed Binomial Markov Chains}.
\newblock {\em Annals of Economics and Statistics\/}~{\em 150}, 81-- 122.
\newblock https://doi.org/10.2307/48731470.

\bibitem[\protect\citeauthoryear{Pawson}{Pawson}{2021}]{pawson_housing_afford}
Pawson, H. (2021).
\newblock {Housing Affordability and Homelessness: Probing Australian
  Evidence}.
\newblock {\em European Journal of Homelessness\/}~{\em 15\/}(3), 73--85.

\bibitem[\protect\citeauthoryear{Poremski, Distasio, Hwang, and
  Latimer}{Poremski et~al.}{2015}]{poremski2015employment}
Poremski, D., J.~Distasio, S.~W. Hwang, and E.~Latimer (2015).
\newblock {Employment and Income of People Who Experience Mental Illness and
  Homelessness in a Large Canadian Sample}.
\newblock {\em {The Canadian Journal of Psychiatry}\/}~{\em 60\/}(9), 379--385.
\newblock https://doi.org/10.1177/070674371506000902.

\bibitem[\protect\citeauthoryear{Poremski, Whitley, and Latimer}{Poremski
  et~al.}{2014}]{shelter_rules}
Poremski, D., R.~Whitley, and E.~Latimer (2014).
\newblock {Barriers to Obtaining Employment for People With Severe Mental
  Illness Experiencing Homelessness}.
\newblock {\em Journal of Mental Health\/}~{\em 23\/}(4), 181--185.
\newblock https://doi.org/10.3109/09638237.2014.910640.

\bibitem[\protect\citeauthoryear{Shaheen and Rio}{Shaheen and
  Rio}{2007}]{role_of_emp1}
Shaheen, G. and J.~Rio (2007).
\newblock {Recognizing Work as a Priority in Preventing or Ending
  Homelessness}.
\newblock {\em The Journal of Primary Prevention\/}~{\em 28}, 341--358.
\newblock https://doi.org/10.1007/s10935-007-0097-5.

\bibitem[\protect\citeauthoryear{Shier, Jones, and Graham}{Shier
  et~al.}{2012}]{homeless_labour_issues}
Shier, M.~L., M.~E. Jones, and J.~R. Graham (2012).
\newblock {Employment Difficulties Experienced by Employed Homeless People:
  Labor Market Factors That Contribute to and Maintain Homelessness}.
\newblock {\em Journal of Poverty\/}~{\em 16\/}(1), 24--47.
\newblock https://doi.org/10.1080/10875549.2012.640522.

\bibitem[\protect\citeauthoryear{{S}tatistics {C}anada}{{S}tatistics
  {C}anada}{2021a}]{statcanada:2021}
{S}tatistics {C}anada (2021a).
\newblock {Indigenous Population in Canada - Projections to 2041}.
\newblock Stats in Brief: {11-627-M2021066}.

\bibitem[\protect\citeauthoryear{{S}tatistics {C}anada}{{S}tatistics
  {C}anada}{2021b}]{local_labor_data:2021}
{S}tatistics {C}anada (2021b).
\newblock {Labour Force Characteristics by Province, Territory and Economic
  Region, Annual}.
\newblock https://doi.org/10.25318/1410009001-eng.
\newblock Table 14-10-0090-01.

\bibitem[\protect\citeauthoryear{Strobel, Burcul, Dai, Ma, Jamani, and
  Hossain}{Strobel et~al.}{2021}]{Strobel_EDvisits}
Strobel, S., I.~Burcul, J.~H. Dai, Z.~Ma, S.~Jamani, and R.~Hossain (2021,
  September).
\newblock {Characterizing People Experiencing Homelessness and Trends in
  Homelessness Using Population-Level Emergency Department Visit Data in
  Ontario, Canada}.
\newblock {\em Health Reports\/}~{\em 32\/}(1), 13--23.
\newblock https://www.doi.org/10.25318/82-003-x202100100002-eng.

\end{thebibliography}

\clearpage
\setcounter{page}{1} 
\renewcommand{\thepage}{\roman{page}} 

\appendix

\section*{Appendix} 

\section{Stock-and-flow model of homelessness}\label{app:two-state}

In this appendix, we analytically derive the homelessness rate and the average duration of homelessness implied by the stock-and-flow model depicted in \cref{fig:flows}, using its Markov chain representation.

\subsection{Markov chain}
Consider a continuum of individuals, each uniquely identified by an index \(i\) ranging from 0 to 1 (i.e., \(i \in [0,1]\)), where the total population has unit mass. Let \(h_{i,t}\) denote the homelessness status of person \(i\) at time \(t \in \{0,1,2,\ldots\}\), where 
\begin{equation*}
	h_{i,t} = 
	\begin{cases}
		1 & \text{if individual \(i\) is housed at time \(t\)}, \\
		0 & \text{if individual \(i\) is homeless at time \(t\)}.
	\end{cases}
\end{equation*}

The stock-and-flow model in \cref{fig:flows} can be represented as a Markov chain with state space \(\mathcal{S} = \{0,1\}\) and transition matrix:
\begin{equation*}\label{pi_2a}
	\mathbf{\Pi} 
	= \begin{bmatrix}
		1 - \theta & \theta \\
		\lambda & 1 - \lambda
	\end{bmatrix},
\end{equation*}
where \(\theta = \mathrm{Pr}(h_{i,t+1} = 1 \mid h_{i,t} = 0)\) is the probability of transitioning from homelessness to being housed, and \(\lambda = \mathrm{Pr}(h_{i,t+1} = 0 \mid h_{i,t} = 1)\) is the probability of transitioning from being housed to homelessness, for each individual \(i\) and time \(t\). As specified in \cref{fig:flows}, both \(\theta\) and \(\lambda\) are positive with \(0 < \theta \le 1\) and \(0 < \lambda \le 1\).\footnote{This appendix focuses on the properties of the two-state Markov chain that are essential for understanding the relationship between the homelessness rate and its duration. For additional statistical properties of the two-state Markov chain not explored in this document, refer to \citet{Hamilton}, \citet{Hamilton_book}, and \citet{BMC}.}

\subsection{Homelessness rate}
Our first task is to determine the stationary homelessness rate using the transition probabilities in \(\mathbf{\Pi}\).

\begin{claim}\label{claim1}
	In the stock-and-flow model, the stationary rate of homelessness is
	\begin{equation*}\label{ergod_rate}
		\alpha = \frac{\lambda}{\lambda + \theta}.
	\end{equation*}
\end{claim}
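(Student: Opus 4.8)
The plan is to identify the stationary homelessness rate $\alpha$ with the first coordinate of the left stationary vector $\bm{\pi} = (\pi_0, \pi_1)$ of the chain, where $\pi_0$ denotes the long-run population share in the homeless state $0$ and $\pi_1 = 1 - \pi_0$ the share housed. By definition, a stationary vector satisfies $\bm{\pi}\,\mathbf{\Pi} = \bm{\pi}$ together with the normalization $\pi_0 + \pi_1 = 1$ and nonnegativity of each coordinate. The entire claim therefore reduces to solving this small linear system and reading off $\pi_0$.

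Concretely, I would write out the first coordinate of $\bm{\pi}\,\mathbf{\Pi} = \bm{\pi}$, which, using the ordering $(0,1)$ of the state space built into $\mathbf{\Pi}$, reads
\begin{equation*}
\pi_0 (1-\theta) + \pi_1 \lambda = \pi_0,
\end{equation*}
and simplify it to the balance relation $\pi_0 \theta = \pi_1 \lambda$; that is, in steady state the outflow from homelessness equals the inflow into it. The second coordinate yields the same relation, so the system carries a single independent equation, and substituting $\pi_1 = 1 - \pi_0$ gives $\pi_0(\theta + \lambda) = \lambda$, hence $\alpha = \pi_0 = \lambda/(\lambda + \theta)$. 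Because $\theta, \lambda > 0$, the denominator is strictly positive and both $\pi_0$ and $\pi_1$ fall in $(0,1)$, confirming that the solution is a genuine probability vector.

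The main point to argue carefully, rather than the algebra itself, is that this stationary vector is \emph{the} relevant one, \ie that it exists and is unique so that the phrase ``the stationary homelessness rate'' is well defined. Here the hypotheses $0 < \theta \le 1$ and $0 < \lambda \le 1$ do the work: both off-diagonal entries of $\mathbf{\Pi}$ are strictly positive, so the chain is irreducible, and a finite irreducible Markov chain admits a unique stationary distribution. I would state this explicitly, optionally noting that aperiodicity holds whenever $\theta < 1$ or $\lambda < 1$, in case convergence of the marginal distribution to $\alpha$ is also desired. With uniqueness established, the vector computed above must be the stationary distribution, and its homeless coordinate is precisely $\alpha = \lambda/(\lambda + \theta)$, completing the proof.
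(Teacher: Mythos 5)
Your proof is correct and takes essentially the same approach as the paper: both solve the two-state stationarity equation with the normalization $\pi_0+\pi_1=1$ (yours in row-vector form $\bm{\pi}\,\mathbf{\Pi}=\bm{\pi}$, the paper's in the equivalent transposed form $\mathbf{\Pi}^\top\bm{\phi}=\bm{\phi}$) and read off $\alpha=\lambda/(\lambda+\theta)$. Your added observation that strict positivity of $\theta$ and $\lambda$ makes the chain irreducible, so the stationary distribution exists and is unique and the phrase ``the stationary rate'' is well defined, is a small rigor improvement that the paper leaves implicit.
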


\begin{proof}
	Let \(\bm{\phi}\) be a column vector representing the stationary distribution of the Markov chain, expressed as \(\bm{\phi} = [\alpha\;\;1-\alpha]^\top\). Here, \(\alpha\) is the proportion of individuals who are homeless, and \(1-\alpha\) is the proportion who are housed. Since the total population is normalized to 1, \(\alpha\) corresponds to the homelessness rate, and \(1-\alpha\) to the housed rate.
	
	The stationary distribution satisfies
	\begin{equation*}\label{ergod_eq1}
		\mathbf{\Pi}^\top \bm{\phi} = \bm{\phi},
	\end{equation*}
	which can be written explicitly as
	\begin{equation}\label{ergod_eq2}
		\begin{bmatrix}
			1-\theta & \lambda \\
			\theta & 1-\lambda
		\end{bmatrix}
		\begin{bmatrix}
			\alpha \\
			1-\alpha
		\end{bmatrix}
		=
		\begin{bmatrix}
			\alpha \\
			1-\alpha
		\end{bmatrix}.
	\end{equation}
	Solving \cref{ergod_eq2} for \(\alpha\) yields
	\[
	\alpha = \frac{\lambda}{\lambda + \theta},
	\]
	completing the proof.
\end{proof}

\subsection{Homelessness duration}
Our second task is to calculate the average duration of homelessness, denoted by \(\overline{D}\).

\begin{claim}\label{claim2}
	In the stock-and-flow model, the average duration of homelessness is the inverse of the exit rate \(\theta\), namely
	\begin{equation*}\label{eq:duration}
		\overline{D} = \frac{1}{\theta}.
	\end{equation*}
\end{claim}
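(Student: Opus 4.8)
The plan is to exploit the time-homogeneity of the Markov chain $\mathbf{\Pi}$ to show that the duration of a single homelessness spell is geometrically distributed, and then to compute its mean directly. First I would fix an individual who has just entered homelessness, i.e.\ is in state $0$, and let $D$ denote the number of periods they remain homeless before their first transition to the housed state. Because the exit probability $\theta = \mathrm{Pr}(h_{i,t+1}=1 \mid h_{i,t}=0)$ is constant across periods and does not depend on how long the individual has already been homeless, the event $\{D=k\}$ requires exactly $k-1$ consecutive ``stay'' transitions, each occurring with probability $1-\theta$, followed by a single exit with probability $\theta$. This gives
\begin{equation*}
\mathrm{Pr}(D = k) = (1-\theta)^{k-1}\,\theta, \qquad k = 1, 2, 3, \ldots,
\end{equation*}
so that $D$ follows a geometric distribution with success probability $\theta$.

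The second step is to evaluate the mean. I would write
\begin{equation*}
\overline{D} = \mathbb{E}[D] = \sum_{k=1}^{\infty} k\,(1-\theta)^{k-1}\,\theta = \theta \sum_{k=1}^{\infty} k\,(1-\theta)^{k-1},
\end{equation*}
and apply the standard power-series identity $\sum_{k \ge 1} k\,x^{k-1} = (1-x)^{-2}$, valid here because $0 < \theta \le 1$ forces $0 \le 1-\theta < 1$. Substituting $x = 1-\theta$ yields $\overline{D} = \theta \cdot \theta^{-2} = 1/\theta$, which is the desired identity.

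An equivalent and arguably cleaner route is a first-step (renewal) argument that avoids summing the series. Conditioning on the outcome of the first period: with probability $\theta$ the spell lasts exactly one period, while with probability $1-\theta$ the individual remains homeless and, by memorylessness, faces an expected residual duration again equal to $\overline{D}$. This produces the self-referential equation
\begin{equation*}
\overline{D} = \theta \cdot 1 + (1-\theta)\bigl(1 + \overline{D}\bigr),
\end{equation*}
which simplifies to $\theta\,\overline{D} = 1$, and hence $\overline{D} = 1/\theta$.

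The main obstacle is not the computation, which is routine in either form, but rather making the probabilistic setup precise: one must justify that the per-period exit probability equals $\theta$ at \emph{every} step, regardless of the elapsed spell length. This memorylessness is exactly the property supplied by the time-homogeneous, two-state structure of $\mathbf{\Pi}$, and it is what simultaneously licenses the geometric form of $\mathrm{Pr}(D=k)$ and the residual-duration step in the recursive argument. Once this is stated carefully, the conclusion $\overline{D} = 1/\theta$ follows immediately.
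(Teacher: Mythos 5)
Your proposal is correct and follows essentially the same route as the paper: both identify the spell length as geometric with success probability $\theta$ and then compute its mean. The only difference is cosmetic---you evaluate the sum via the closed-form identity $\sum_{k\ge 1} k\,x^{k-1} = (1-x)^{-2}$, whereas the paper rearranges the series to obtain the recursion $\overline{d} = \theta + (1-\theta)(1+\overline{d})$, which is exactly the equation your alternative first-step (renewal) argument produces directly, so all three computations are interchangeable.
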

\begin{proof}
	Consider an individual who has just become homeless. Each period, this individual faces a probability \(\theta\) of exiting homelessness, effectively forming a sequence of Bernoulli trials. Consequently, the duration of homelessness follows a geometric distribution with success probability \(\theta\). Thus, the probability of remaining homeless for exactly \(k\) periods is
	\begin{equation*}
		\Pr(d = k) = (1-\theta)^{k-1}\,\theta, \quad k = 1,2,3,\ldots.
	\end{equation*}
	The expected duration of homelessness is therefore
	\begin{equation}\label{eq:mean}
		\overline{d} = \sum_{k=1}^\infty k \,(1-\theta)^{k-1}\,\theta.
	\end{equation}
	By expanding the right-hand side of \cref{eq:mean}, we obtain:
\begin{align*}
	\sum_{k=1}^\infty k (1-\theta)^{k-1} \theta   
	&= \theta + (1-\theta) \sum_{k=2}^\infty k (1-\theta)^{k-2} \theta \\
	&= \theta + (1-\theta) \sum_{k=2}^\infty (1+k-1 ) (1-\theta)^{k-2} \theta \\
	&= \theta + (1-\theta) \left[ \theta \sum_{k=2}^\infty (1-\theta)^{k-2}  + \sum_{k=2}^\infty (k-1) (1-\theta)^{k-2} \theta  \right].
\end{align*}
	The first and second terms inside the square brackets simplify to 1 and \(\overline{d}\), respectively. Combining these results with \cref{eq:mean} yields
	\begin{equation*}
		\overline{d} 
		= \theta + (1-\theta)(1 + \overline{d}),
	\end{equation*}
	which further simplifies to
	\begin{equation*}
		\overline{d} = \frac{1}{\theta}.
	\end{equation*}
	Since all homeless individuals share the same exit rate, the average duration of homelessness (denoted \(\overline{D}\)) is also equal to $1/\theta$. This completes the proof of \cref{claim2}.
\end{proof}

\end{document}